\newtheorem{theorem}{Theorem}
\newtheorem{proposition}[theorem]{Proposition} 
\newtheorem{corollary}[theorem]{Corollary} 
\newtheorem{lemma}[theorem]{Lemma}
\newtheorem{definition}[theorem]{Definition}
\newtheorem{remark}[theorem]{Remark}
\newcommand{\dist}{\mathsf{dist}}
\DeclareMathOperator{\APBP}{APBP}
\DeclareMathOperator{\GTREE}{GromovTree}
\DeclareMathOperator{\GPRD}
{ToProd}
\DeclareMathOperator{\LPRD}{LocProd}
\DeclareMathOperator{\EXTE}{METR}
\DeclareMathOperator{\IGPRD}
{ToDist}
\begin{document}

\title[Gromov's Tree and the APBP Problem]{ Gromov's Approximating Tree and the All-Pairs Bottleneck Paths Problem}

\author[ ]{Anders Cornect} 
\address{University of Waterloo, Waterloo, ON, Canada}
\email{acornect@uwaterloo.ca}

\author[ ]{Eduardo Mart\'inez-Pedroza }  
\address{Memorial University of Newfoundland, St. John's, NL, Canada}
\email{emartinezped@mun.ca}

\date{\today}

\begin{abstract}
Given a pointed metric space $(X,\mathsf{dist}, w)$ on $n$ points, Gromov's approximating tree is a 0-hyperbolic pseudo-metric space $(X,\mathsf{dist}_T)$ such that $\mathsf{dist}(x,w)=\mathsf{dist}_T(x,w)$ and $\mathsf{dist}(x, y)-2 \delta \log_2n \leq \mathsf{dist}_T (x, y) \leq \mathsf{dist}(x, y)$ for all $x, y \in X$ where $\delta$ is the Gromov hyperbolicity constant of $X$. On the other hand, the all pairs bottleneck paths (APBP) problem asks, given an undirected graph with some capacities on its edges, to find the maximal path capacity between each pair of vertices. In this note, we prove:    \begin{itemize}
        \item Computing Gromov's approximating tree for a metric space with $n+1$ points from its matrix of distances reduces to solving the APBP problem on an connected graph with $n$ vertices.
        \item There is an explicit algorithm that computes Gromov's approximating tree for a graph from its adjacency matrix in quadratic time.
        \item Solving the APBP problem on a weighted graph with $n$ vertices reduces to finding Gromov's approximating tree for a metric space with $n+1$ points from its distance matrix.
    \end{itemize}
\end{abstract}

\maketitle


\section*{Introduction}

The notion of $\delta$-hyperbolic space was introduced by Gromov. A pseudo-metric space $(X,\dist)$ is $\delta$-hyperbolic if it satisfies that, for any $x,y,z,w \in X$,
\begin{equation} \nonumber
\dist(x,z) + \dist(y,w) \leq   \max\{\dist(x,y)+\dist(z,w), \dist(y,z)+\dist(x,w) \} + 2\delta.
\end{equation}

Any finite pseudo-metric space is $\delta$-hyperbolic for some $\delta\geq0$ and $\delta$ can be regarded as a measure of tree-likeness. This can be justified by the following facts:
\begin{enumerate}
    \item The path metric on every weighted tree (a simplicial tree on which each edge has been assigned a positive length) is $0$-hyperbolic; see~\cite[Page 322]{Dr84} and references therein.
    \item A metric space $(X,\dist)$ isometrically embeds into a weighted tree if and only if $(X,\dist)$ is $0$-hyperbolic; see~\cite[Theorem 8]{Dr84} or references therein. 
    \item (Gromov's approximating tree) For any $\delta$-hyperbolic metric space $(X,\dist)$ and any $w\in X$ there is a $0$-hyperbolic metric space $(X,\dist_T)$ such that $\dist_T(w,x)=\dist(w,x)$ and 
    $\dist(x,y)-2\delta \log_2 |X| \leq \dist_T(x,y) \leq \dist(x,y)$; see~\cite[\S 6.1]{Gr87} or~\cite[\S 2]{GH88}. 
\end{enumerate}

The set of pseudo-metrics on a set $X$ is partially ordered, $\dist_1 \leq \dist_2$ if and only if $\dist_1(x,y)\leq \dist(x,y)$ for any $x,y\in X$. We show that Gromov's approximating tree is optimal with respect to this order in following sense.

\begin{proposition}[Proposition~\ref{prop:Optimal}]\label{prop:OptimalIntro}
Gromov's approximating tree of $(X,\dist)$ with respect to $w\in X$ is the maximum of the set of all $0$-hyperbolic pseudo-metrics on $X$ that do not increase distances with respect to $\dist$ and respects distances to $w$.      
\end{proposition}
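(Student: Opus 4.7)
The plan is to compare pseudo-metrics via their Gromov products based at $w$. Recall that for any pseudo-metric $\rho$ on $X$ the Gromov product at $w$ is $(x|y)_w^\rho := \tfrac{1}{2}(\rho(x,w)+\rho(y,w)-\rho(x,y))$, and that $\rho$ being $0$-hyperbolic is equivalent to the ultrametric inequality $(x|y)_w^\rho \geq \min\{(x|z)_w^\rho,(z|y)_w^\rho\}$ holding for all $x,y,z\in X$. I would record this equivalence as a preliminary lemma.

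First I would recall the explicit description of Gromov's approximating tree,
\[
\dist_T(x,y) \;=\; \dist(x,w)+\dist(y,w)-2\,(x|y)_w^T,
\]
where the tree product is the max-min along finite chains,
\[
(x|y)_w^T \;:=\; \sup\Bigl\{\,\min_{1\leq i \leq n}(x_{i-1}|x_i)_w \;:\; x=x_0,x_1,\dots,x_n=y\,\Bigr\}.
\]
Because $\dist_T$ agrees with $\dist$ on distances to $w$, any pseudo-metric $\rho$ that also satisfies $\rho(\cdot,w)=\dist(\cdot,w)$ compares to $\dist_T$ purely through Gromov products at $w$: namely, $\rho(x,y)\leq \dist_T(x,y)$ if and only if $(x|y)_w^\rho \geq (x|y)_w^T$.

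Next I would fix an arbitrary $0$-hyperbolic pseudo-metric $\rho$ on $X$ satisfying $\rho(x,w)=\dist(x,w)$ and $\rho\leq \dist$ pointwise, and prove the product inequality above. Given a chain $x=x_0,\dots,x_n=y$, an induction on $n$ that applies the ultrametric inequality for $\rho$ at $w$ at each step yields $(x|y)_w^\rho \geq \min_i (x_{i-1}|x_i)_w^\rho$. The two hypotheses on $\rho$ then give $(x_{i-1}|x_i)_w^\rho \geq (x_{i-1}|x_i)_w$ term-by-term. Taking the supremum over all chains proves $(x|y)_w^\rho \geq (x|y)_w^T$, and hence $\rho(x,y)\leq \dist_T(x,y)$, which is the claimed maximality.

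The main obstacle is the inductive step that propagates the three-point ultrametric inequality along a chain of arbitrary length; once the equivalence between $0$-hyperbolicity and the ultrametric inequality on Gromov products has been recorded, this is a clean induction using $\min(a,\min(b,c))=\min(a,b,c)$, and the remainder of the argument is bookkeeping with Gromov products. No quantitative estimate is needed, only monotonicity of the max-min supremum, so the proof does not interact with the $2\delta\log_2 n$ distortion bound at all.
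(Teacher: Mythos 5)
Your proposal is correct and follows essentially the same route as the paper's proof: both arguments pass to Gromov products at $w$, use the $0$-hyperbolicity (ultrametric) inequality to show that the products of the competing pseudo-metric dominate their own max--min over chains, compare term-by-term with the products of $\dist$ using the two hypotheses, and conclude via the sup-over-chains formula~\eqref{eq:11}. The only difference is that you spell out the induction along a chain, which the paper compresses into a single ``it follows that'' step.
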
 

A pseudo-metric space $(X,\dist)$ with $n$ points and a fixed enumeration $X=\{x_1,\ldots ,x_n\}$ is determined by its \emph{distance matrix} 
$D=(d_{ij})$ where $d_{ij}=\dist(x_i,x_j)$. We also show that Gromov's argument proving the existence of Gromov's approximating tree proves the following statement.

\begin{proposition}[Proposition~\ref{prop:7}]\label{prop:01-intro}
The distance matrix of Gromov's approximating tree of a metric space on $n$ points can be computed from the distance matrix of the space with time complexity $O(n^2)$. 
\end{proposition}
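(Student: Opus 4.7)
The plan is to recognize the distance matrix of Gromov's approximating tree as the output of an all-pairs bottleneck-paths computation on a complete graph whose edge weights are the Gromov products, and then to solve that computation in quadratic time using a maximum spanning tree.

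First I would compute, from the input distance matrix, the matrix of Gromov products $(x_i | x_j)_w = \tfrac{1}{2}\bigl(\dist(x_i,w) + \dist(x_j,w) - \dist(x_i,x_j)\bigr)$; this is $O(n^2)$. The standard construction cited in \cite{Gr87,GH88} characterizes the approximating tree by
\[
(x|y)_w^T \;=\; \max_{x=u_0,u_1,\ldots,u_k=y}\; \min_{0 \le i < k} (u_i|u_{i+1})_w,
\]
where the maximum ranges over all chains in $X$ from $x$ to $y$, and then sets $\dist_T(x,y) = \dist(x,w) + \dist(y,w) - 2(x|y)_w^T$. The right-hand side is precisely the bottleneck value between $x$ and $y$ in the complete graph $K_n$ with edge weights $(x_i|x_j)_w$, so the whole computation reduces to solving APBP on $K_n$ followed by one more $O(n^2)$ conversion pass.

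To solve this APBP instance in $O(n^2)$ time, I would build a maximum spanning tree $\mathcal T$ of $K_n$ under the Gromov-product weights. Prim's algorithm achieves this in $O(n^2)$ on dense input. By the standard max-bottleneck property of maximum spanning trees, the bottleneck value between any two vertices of $K_n$ equals the minimum edge weight along the unique path between them in $\mathcal T$. These $\binom{n}{2}$ path minima can then be extracted in $O(n^2)$ total time by performing, for each source vertex $x_i$, a single depth-first traversal of $\mathcal T$ that propagates a running minimum of edge weights along the traversal.

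The main obstacle is verifying that the algorithm's output really is Gromov's approximating tree, rather than some other $0$-hyperbolic companion. Instead of reproducing Gromov's inductive construction, I would appeal to Proposition~\ref{prop:OptimalIntro}: the max-min definition of $(x|y)_w^T$ immediately yields the ultrametric inequality $(x|y)_w^T \geq \min\{(x|z)_w^T,(z|y)_w^T\}$ by concatenation of chains through $z$, which is equivalent to $0$-hyperbolicity of the induced $\dist_T$; that $\dist_T$ does not increase distances and preserves distances to $w$ is direct from the formulas; and maximality of such a pseudo-metric uniquely characterizes Gromov's tree, so the algorithmic output coincides with the intended object.
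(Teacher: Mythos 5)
Your proposal is correct and follows essentially the same route as the paper: convert the distance matrix to the matrix of Gromov products, observe that Gromov's formula for the tree's products is exactly an all-pairs bottleneck-paths computation on the complete graph with those products as capacities, and convert back, each step in $O(n^2)$. The only difference is that you spell out the $O(n^2)$ APBP subroutine (maximum spanning tree plus per-source traversals), where the paper simply cites Hu's result \cite{Hu61}, which is that same algorithm.
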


The quadratic time complexity in the above statement arises from interpreting Gromov's  definition in~\cite{Gr87} of Gromov's approximating tree for a metric space on $n$ vertices as the solution of the all pairs bottleneck path problem (APBP problem) for a complete graph on $n$ vertices with capacities on its edges, see Proposition~\ref{prop:7}.  The APBP problem in an undirected graph on $n$ vertices with real capacities can be computed in time $O(n^2)$, see~\cite{Hu61}. Conversely, we show that an algorithm that computes the distance matrix of Gromov's approximating tree from the distance matrix of the metric space can be used to solve the APBP problem, see Proposition~\ref{prop:11}. Hence, we deduce the following statement.

\begin{proposition}[Corollary~\ref{cor:14}]\label{prop:APBPreductionIntro}
   Solving the APBP problem on an undirected graph on $n$ vertices with positive capacities is equivalent to finding Gromov's approximating tree of a metric space on $n+1$ points.    
\end{proposition}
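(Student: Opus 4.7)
The plan is to prove the equivalence by composing two reductions. The forward reduction---that computing Gromov's approximating tree for an $(n+1)$-point metric space reduces to solving APBP on an $n$-vertex graph---is already given by Proposition~\ref{prop:01-intro}, so what remains is the reverse reduction (called Proposition~\ref{prop:11} in the excerpt): producing an APBP solver from an oracle that outputs Gromov-tree distance matrices.

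For this reverse reduction, let $G=(V,E)$ be an undirected graph on $n$ vertices with positive edge capacities $c\colon E\to\mathbb{R}_{>0}$. I would fix a constant $C>n\cdot\max_{e\in E}c(e)$, adjoin a fresh basepoint $w$, and equip $X=V\cup\{w\}$ with the distance $\dist(w,v)=C$ for all $v\in V$, $\dist(u,v)=2C-2c(uv)$ for edges $uv\in E$, and $\dist(u,v)=2C$ for distinct non-adjacent $u,v\in V$. A short case analysis verifies that $(X,\dist)$ is a metric: triangle inequalities involving $w$ are immediate, while those internal to $V$ reduce to $c(uz)+c(zv)\leq C+c(uv)$ (or variants where $c$ is replaced by $0$ on non-adjacent pairs), and these hold by the choice of $C$. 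With this metric in hand, the Gromov products with respect to $w$ compute to $(u|v)_w=c(uv)$ when $uv\in E$, $(u|v)_w=0$ for non-adjacent pairs in $V$, and $(w|x)_w=0$ otherwise.

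Now the tree Gromov product $\tilde\rho(u,v)=\sup\min_i(x_i|x_{i+1})_w$ from Gromov's construction (supremum over chains from $u$ to $v$ in $X$) coincides with the maximum-bottleneck path capacity $c^*(u,v)$ of $G$: chains that pass through $w$ register a zero at that step and are never optimal, whereas chains inside $V$ enumerate precisely the $u$--$v$ paths of $G$ with chain minimum equal to the bottleneck of the path. Since Gromov's tree distance is $\dist_T(u,v)=2C-2\tilde\rho(u,v)$, a single call to the oracle plus $O(n^2)$ post-processing recovers each $c^*(u,v)=C-\dist_T(u,v)/2$, proving Proposition~\ref{prop:11} and, combined with Proposition~\ref{prop:01-intro}, the claimed equivalence.

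The main obstacle I anticipate is calibrating the constant $C$ so that the embedded metric satisfies the triangle inequality uniformly across adjacent, non-adjacent, and mixed triples. The bound $C>n\cdot\max_{e}c(e)$ is natural because a simple path in $G$ has at most $n-1$ edges, so no sum of edge capacities along a path can overtake the $C$-gap built into the encoding; a secondary sanity check is that introducing $w$ at uniform distance $C$ ensures chains routed through $w$ are always dominated by purely internal chains, which is what enables the Gromov-tree output to speak exclusively about bottlenecks in $G$.
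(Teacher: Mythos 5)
Your proposal is correct and follows essentially the same route as the paper: your auxiliary metric space (basepoint $w$ at uniform distance $C$ from all vertices, with $\dist(u,v)=2C-2c(uv)$) is exactly the paper's construction $\IGPRD\circ\EXTE$, which sets the Gromov products at $w$ equal to the capacities and verifies the triangle inequality via the same computation; the only differences are cosmetic — you take $C>n\cdot\max_e c(e)$ where the paper's $\mu=1+\max_{ij}c_{ij}$ already suffices (only $2\max_e c(e)$ is needed for the triangle inequality), and you give non-edges capacity $0$ where the paper completes the graph with small positive capacities.
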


By a \emph{graph metric space} $(X,\dist)$ we mean a metric space where distances take integer values and such that there is a connected simple graph $\Gamma$ with vertex set $X$ such that $\dist(x,y)$ is the length of the shortest edge-path between vertices; let us emphasize that all edges of $\Gamma$ are assumed to have length one. In this case we say that $\Gamma$ \emph{realizes} $(X,\dist)$, and the Gromov's approximating tree of $(X,\dist)$ is also referred to as the Gromov's approximating tree of the connected graph $\Gamma$.

Distance approximating trees of connected graphs are a rich field of study within computational and applied graph theory. There are particular constructions of distance approximating trees of graphs that can be computed in linear time on the number of vertices of the graph, but they are either not optimal in the sense of Proposition~\ref{prop:OptimalIntro} or they increase distances, see for example~\cite{FRT2008, CD2000, DY2006}. 

Gromov's approximating tree of a connected graph is harder to compute directly from its adjacency matrix  in the following sense. Suppose that $\Gamma$ is a connected graph with $n$ vertices. Let $G$ denote the  adjacency matrix of $\Gamma$ and let $D$ be the distance matrix of the vertex set $\Gamma$. The   work of Seidel~\cite{SE95} shows that one can compute $D$ from $G$ in time complexity $O(n^\omega \log n)$, where $O(n^\omega)$ is the complexity of multiplying two $n\times n$ matrices. The best known upper bound of $\omega$ is $2.371552$, see the recent article~\cite{williams2023new}. Hence, based on Proposition~\ref{prop:01-intro}, there is an algorithm that computes the distance matrix of Gromov's approximating tree of $\Gamma$ from $G$ in time complexity $O(n^\omega \log n)$. This is not the best procedure, as we can show the following statement.

\begin{proposition}\label{prop:GraphsIntro}
There is an algorithm that computes Gromov's approximating tree of a connected graph in $n$ vertices from its adjacency matrix in time complexity $O(n^2)$.    
\end{proposition}

Through the article, we work in the class of pseudo-metric spaces. All graphs considered are simple graphs, that is, they are unoriented, 
 no edge connects a vertex to itself, and no two vertices are connected by more than one edge. 
 
 The rest of this article is organized into six sections. In the first section, we discuss the notion of Gromov product and introduce some mathematical notation that is used for the remainder of the note. In the second section, we discuss Gromov's approximating tree and prove that it is an optimal approximation in the sense of Proposition~\ref{prop:OptimalIntro}. The third section explains the relation between computing Gromov's approximating tree and the APBP problem; in particular, it includes the proof of Proposition~\ref{prop:01-intro}. The fourth section discusses the proof of Proposition~\ref{prop:APBPreductionIntro}, that is, how the APBP problem can be reduced to computing a Gromov's approximating tree. Finally, the fifth section is on Gromov's approximating trees of graphs and the proof of Proposition~\ref{prop:GraphsIntro}. 
 
\subsection*{Acknowledgements} 
The authors thank the referee of the article for a careful reading and tremendous feedback that improved the quality of this work. The first author acknowledges funding by the Natural Sciences and Engineering Research Council of Canada NSERC, via Undergraduate Student Research Awards (USRA) during 2023 and 2024. The second author acknowledges funding by the Natural Sciences and Engineering Research Council of Canada NSERC.

\section{Metric spaces and Gromov products}

Let $(X, \dist)$ be a finite pseudo-metric space on $n$ points with a fixed enumeration
\begin{equation}\label{eq:01} X=\{x_1,\ldots ,x_n\}.\end{equation}
For any $x,y,w\in X$, the Gromov product of $x$ and $y$ at $w$ is defined as
\begin{equation}\label{eq:02} (x|y)_w = \frac12\left( \dist(x,w)+\dist(y,w)-\dist(x,y) \right) .\end{equation} 
We regard the  pseudo-metric space $(X,\dist)$ as its $n\times n$ \emph{distance matrix} $D$, 
\begin{equation}\label{eq:03}
D=(d_{ij}),\qquad d_{ij}=\dist(x_i,x_j).
\end{equation}
Fix a point $w\in X$. Assume, without loss of generality, that
\begin{equation}\label{eq:04}
w=x_n.\end{equation}
Let $L$ be the $n\times n$ matrix of Gromov products in $(X,\dist)$ at $w$, that is, 
\begin{equation} \label{eq:05} L=(\ell_{ij}), \qquad \ell_{ij}=(x_i|x_j)_w.\end{equation}
The definition of Gromov product~\eqref{eq:02} and the assumption $w=x_n$ imply that  \begin{equation}\label{eq:relations} d_{ni}=\ell_{ii},\quad  d_{ij}= \ell_{ii}+\ell_{jj} -2\ell_{ij}, \quad 2\ell_{ij}= d_{ni}+d_{nj}-d_{ij} .\end{equation}
Hence, there are algorithms that compute $L$ from $D$, and  $D$ from $L$, both with time complexity $O(n^2)$. In a diagram,
\begin{equation}
 \begin{tikzcd}
   D\arrow[rr, bend left, "O(n^2)"] && L \arrow[ll, bend left, "O(n^2)"] 
\end{tikzcd}   
\end{equation}
Let us name these two procedures which play an essential role in the statement of our main results.

\begin{definition}[$\mathcal{D}_n$ and $\mathcal{P}_n$]
    Let $\mathcal{D}_n$ be the set of $n\times n$  matrices  which are distance matrices of pseudo-metric spaces with $n$ points. Let $\mathcal{P}_n$  
denote the set of $n\times n$  matrices   which are matrices of Gromov products  of pointed pseudo-metric spaces with $n$ points.
\end{definition}

\begin{definition}[$\GPRD$ and $\IGPRD$]\label{def:1}
Let the functions \[\GPRD\colon \mathcal{D}_n \to \mathcal{P}_n\qquad \text{and} \qquad \IGPRD\colon \mathcal{P}_n \to \mathcal{D}_n\] 
be defined 
as follows. 
If $D=(d_{ij})$ is the distance matrix of a pseudo-metric space with $n$ points as in~\eqref{eq:03}, let $L=\GPRD(D)$ be the matrix $L=(\ell_{ij})$ of Gromov products given by~\eqref{eq:relations}. 
If $L=(\ell_{ij})$ is the matrix of Gromov products of a metric space with $n$ points as in~\eqref{eq:05}, let $D=\IGPRD(L)$ be the distance matrix given by~\eqref{eq:relations}. 
\end{definition}

\begin{remark}\label{rem:Inverses}
As functions, $\GPRD$ and $\IGPRD$ are inverses, 
\begin{equation}\label{eq:08}
    \IGPRD \circ \GPRD = \mathsf{Id}_{\mathcal{D}_n}  \quad \text{and} \quad \GPRD\circ\IGPRD = \mathsf{Id}_{\mathcal{P}_n},
\end{equation}
and moreover, they both can be computed in  time complexity $O(n^2)$ over $\mathcal{D}_n$ and $\mathcal{P}_n$ respectively.  
\end{remark}

\section{Gromov's Approximating tree}

Let $\delta\geq0$. The pseudo-metric space $(X,\dist)$ is \emph{$\delta$-hyperbolic} if 
\[ (x|z)_w \geq \min\{ (x|y)_w, (y|z)_w \} -\delta    \]
for every $x,y,z,w\in X$. 
The constant  \[\delta_* = \max_{x,y,z,w\in X}\Big\{  \min\{ (x|y)_w, (y|z)_w \} -(x|z)_w \Big\}\]
is called the \emph{Gromov hyperbolicity constant} of $(X,\dist)$. It is an observation that this definition is equivalent to the one in the introduction. A \emph{weighted tree} is a metric space whose underlying space is a simplicial tree where each edge has been assigned a positive length, and whose metric is the induced length metric. As we remarked in the introduction, a weighted tree is a $0$-hyperbolic space.
The following result by Gromov illustrates that the hyperbolicity of a finite pseudo-metric space is a measure of tree-likeness.  

\begin{theorem}[Gromov's Approximating Tree] \cite[Page 155]{Gr87}\label{thm:GromovTree}
Let $(X,\dist)$ be a $\delta$-hyperbolic pseudo-metric space with $n$ points. For any $w\in X$ there exists a weighted tree $T$ and a surjective map $\varphi\colon X \to T$ such that 
\[ \dist(x, w) = \dist_T (\varphi(x), \varphi(w))\]  and
\[ \dist(x, y)-2 \delta \log_2n   \leq \dist_T (\varphi(x), \varphi(y)) \leq \dist(x, y)\] 
for all $x, y \in X$.  
\end{theorem}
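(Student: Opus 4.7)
The plan is to construct the tree metric $\dist_T$ directly on $X$ as a \emph{pseudo-metric}, show it is $0$-hyperbolic, and only at the very end invoke fact (2) from the introduction to realize $(X,\dist_T)$ as the image of a surjective map into a weighted tree $T$.

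First I would reformulate everything in terms of Gromov products at $w$. Define a new ``tree Gromov product'' by a bottleneck-path maximum over the complete graph on $X$ with edge weights $(x|y)_w$:
\[
 (x|y)_w^T \;=\; \max_{x=x_0,x_1,\ldots,x_k=y}\; \min_{1\leq i\leq k}\; (x_{i-1}|x_i)_w,
\]
where the maximum is over all finite sequences in $X$ from $x$ to $y$. By construction this quantity satisfies the ultrametric-type inequality $(x|z)_w^T \geq \min\{(x|y)_w^T,(y|z)_w^T\}$ for every $y$, since concatenating the two optimal paths produces a path from $x$ to $z$. I would then define
\[
 \dist_T(x,y)\;=\;\dist(x,w)+\dist(y,w)-2(x|y)_w^T,
\]
and check that $\dist_T$ is a pseudo-metric on $X$; the triangle inequality follows from the ultrametric-type inequality above, together with the book-keeping $(x|y)_w^T \leq \min\{\dist(x,w),\dist(y,w)\}$, which I would verify first.

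Next I would verify the two quantitative properties. The basepoint identity $\dist_T(x,w)=\dist(x,w)$ reduces to $(x|w)_w^T=0$, which follows from the observation that any path from $x$ to $w$ has last edge contributing $(x_{k-1}|w)_w=0$, so the min along the path is at most $0$, while the direct path already realizes $0$. The upper bound $\dist_T \leq \dist$ is equivalent to $(x|y)_w^T\geq (x|y)_w$, which is trivial from the single-edge path $x,y$. The genuine content is the lower bound $\dist_T(x,y)\geq\dist(x,y)-2\delta\log_2 n$, which is equivalent to
\[
 (x|y)_w^T \;\leq\; (x|y)_w + \delta\log_2 n.
\]
Here I would iterate the $\delta$-hyperbolicity inequality along a path $x=x_0,\ldots,x_k=y$ realizing $(x|y)_w^T$: a standard dyadic doubling argument yields
\[
 (x|y)_w \;\geq\; \min_{1\leq i\leq k}(x_{i-1}|x_i)_w - \delta\lceil\log_2 k\rceil,
\]
and since we may restrict to simple paths we have $k\leq n-1$, giving the desired estimate. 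This doubling step is the \textbf{main obstacle}: one must be careful to split the path into two halves whose minimum Gromov products dominate the minimum of the whole, and then apply $\delta$-hyperbolicity at the midpoint; this is where the $\log_2$ (and the factor of $2$ after converting back to distances) appears.

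Finally, $(X,\dist_T)$ is a $0$-hyperbolic pseudo-metric space: its Gromov products at $w$ are precisely $(x|y)_w^T$, which satisfy the ultrametric inequality with $\delta=0$. Passing to the metric quotient identifying points at $\dist_T$-distance zero, fact (2) from the introduction embeds this quotient isometrically into a weighted tree $T$; composing the quotient map with the embedding produces the required surjective map $\varphi\colon X\to T$ (after restricting the codomain to the image and extending to a tree by adding the spanning edges), and the two displayed inequalities transfer verbatim from $\dist_T$ to $\dist_T(\varphi(\cdot),\varphi(\cdot))$.
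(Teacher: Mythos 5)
Your construction is precisely the one the paper itself describes and attributes to Gromov and Ghys--de la Harpe: the bottleneck formula you use for $(x|y)_w^T$ is the paper's expression~\eqref{eq:11} for $(x|y)_w'$, and the rest (ultrametric inequality, basepoint identity, quotient to a weighted tree) is the standard verification, so the approach is the same and is essentially correct. The one soft spot is the final constant: the dyadic doubling argument yields $(x|y)_w^T\le (x|y)_w+\delta\lceil\log_2 k\rceil$ with $k\le n-1$ edges on a simple chain, and $\lceil\log_2(n-1)\rceil$ can strictly exceed $\log_2 n$ (e.g.\ $n=7$ gives $3>\log_2 7$), so landing on the literal bound $2\delta\log_2 n$ requires either more careful bookkeeping or accepting a ceiling --- a well-known quirk of how this theorem is usually stated rather than a flaw in your method.
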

The pair $(T,\varphi)$ given by the theorem is called \emph{Gromov's approximating tree of $(X,\dist)$ based at $w$}.  Gromov's argument on the existence of $(T,\varphi)$ in~\cite{Gr87} was revised by
Ghys and de la Harpe who provided a detailed proof of the theorem~\cite[Chapitre 2]{GH88}. The argument defines $(T,\varphi)$ via a pseudo-metric $\dist_T$ on the set $X$. From here on, we regard Gromov's approximating tree as the pseudo metric space $(X,\dist_T)$ and $\varphi$ as the identity map $(X,\dist)\to(X,\dist_T)$.   

By Remark~\ref{rem:Inverses},  the pseudo-metric $\dist_T$ is determined via its Gromov products with respect to $w$, which  we denote by  
\begin{equation}
\begin{split} (x|y)_w'& :=\frac12\left( \dist_T(x,w)+\dist_T(y,w)-\dist_T(x,y) \right)\\
&= \frac12\left(\dist(x,w)+\dist(y,w)-\dist_T(x,y)\right).
\end{split}\end{equation}
This expression yields
\begin{equation}\label{eq:10} \begin{split}\dist_T(x,y) & = (x|x)_w'+(y|y)_w'-2(x|y)_w'\\
& = (x|x)_w +(y|y)_w -2(x|y)_w' .
\end{split}
\end{equation}
In Gromov's argument~\cite[Page 156]{Gr87} the products $(x|y)'_w$ are given by the expression 
\begin{equation}\label{eq:11} (x|y)_w' = \sup_{\bar y \in S_{x,y}} \left\{ \min_k (y_k|y_{k+1})_w \right\}\end{equation}
where $S_{x,y}$ is the set of finite sequences of points $\bar y = (y_1,\ldots ,y_\ell)$ of $X$ such that $\ell\geq2$, $x=y_1$ and $y=y_\ell$.  

Expression~\eqref{eq:11} yields that Gromov's approximating tree is an optimal approximation in the following sense.

\begin{proposition}~\label{prop:Optimal} 
 Let $(X,\dist)$ be a finite pseudo-metric space and $(X,\dist_T)$ be its Gromov's approximating tree with respect to $w\in X$. If $(X,\dist_S)$ is a $0$-hyperbolic space such that $\dist_S(w,\cdot)=\dist(w,\cdot)$ and $\dist_S\leq \dist$, then $\dist_S\leq \dist_T$.    
\end{proposition}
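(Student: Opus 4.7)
The plan is to compare everything through Gromov products at $w$. Since both $\dist_S$ and $\dist_T$ agree with $\dist$ at $w$, writing $(x|y)_w^S$ and $(x|y)_w'$ for the Gromov products of $\dist_S$ and $\dist_T$ at $w$, the inequality $\dist_S\leq\dist_T$ to be proved is equivalent, via the identity $\dist_S(x,y)=\dist(x,w)+\dist(y,w)-2(x|y)_w^S$ and equation~\eqref{eq:10}, to the pointwise inequality $(x|y)_w^S \geq (x|y)_w'$ for all $x,y\in X$. So everything reduces to bounding $(x|y)_w^S$ from below by the supremum formula~\eqref{eq:11}.

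First I would observe that the hypothesis $\dist_S\leq\dist$ combined with $\dist_S(w,\cdot)=\dist(w,\cdot)$ gives immediately the comparison of Gromov products at $w$:
\begin{equation*}
(a|b)_w^S = \tfrac12\bigl(\dist(a,w)+\dist(b,w)-\dist_S(a,b)\bigr)\ \geq\ (a|b)_w
\end{equation*}
for every $a,b\in X$. Next, I would use the $0$-hyperbolicity of $\dist_S$: the Gromov product $(\cdot|\cdot)_w^S$ satisfies the ultrametric inequality
\begin{equation*}
(x|z)_w^S \geq \min\bigl\{(x|y)_w^S,\,(y|z)_w^S\bigr\}
\end{equation*}
for all $x,y,z\in X$. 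A straightforward induction on the length $\ell$ of a sequence $\bar y=(y_1,\dots,y_\ell)$ with $y_1=x$ and $y_\ell=y$ then yields
\begin{equation*}
(x|y)_w^S \ \geq\ \min_{1\leq k<\ell}\,(y_k|y_{k+1})_w^S.
\end{equation*}

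Combining the two bounds, for any sequence $\bar y\in S_{x,y}$ we get
\begin{equation*}
(x|y)_w^S\ \geq\ \min_{k}\,(y_k|y_{k+1})_w^S\ \geq\ \min_{k}\,(y_k|y_{k+1})_w.
\end{equation*}
Taking the supremum over all $\bar y \in S_{x,y}$ and invoking Gromov's formula~\eqref{eq:11} gives $(x|y)_w^S \geq (x|y)_w'$, which by the first step is equivalent to $\dist_S(x,y)\leq \dist_T(x,y)$. This completes the argument.

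The only nontrivial step is the induction establishing the chain version of the $0$-hyperbolic ultrametric inequality for Gromov products; this is routine but should be written out carefully, as it is where the $0$-hyperbolicity hypothesis on $\dist_S$ is fully exploited. Everything else is direct substitution and monotonicity of the Gromov product in the distance variable.
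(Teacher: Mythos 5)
Your proposal is correct and follows essentially the same route as the paper's proof: reduce to Gromov products at $w$, note that $(x|y)^S_w \geq (x|y)_w$ from $\dist_S\leq\dist$, and use the $0$-hyperbolic ultrametric inequality for $(\cdot|\cdot)^S_w$ together with formula~\eqref{eq:11}. The only cosmetic difference is that the paper packages the chain induction as the equality $(x|y)_S=\sup_{\bar y\in S_{x,y}}\{\min_k(y_k|y_{k+1})_S\}$, whereas you derive only the inequality direction actually needed.
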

\begin{proof}
Let us denote by $(\cdot|\cdot)_S$ the the Gromov product on $(X,\dist_S)$ with respect to $w$.
Since $(X,\dist_S)$ is $0$-hyperbolic, 
$(x|z)_S\geq\min\{(x|y)_S, (y|z)_S\}$ for any $x,y,z\in X$. It follows that 
\[(x|y)_S=\sup_{\bar y\in S_{x,y}} \left\{ \min_k(y_k|y_{k+1})_S \right\}. \]
Since $(x|y)\leq (x|y)_S$, expression~\eqref{eq:11} and the one above implies that $(x|y)' \leq (x|y)_S$ for any $x,y\in X$, which is equivalent to  $\dist_S\leq \dist_T$.    
\end{proof}

We are interested in computing the distance matrix $A$ of Gromov's approximating tree $(X,\dist_T)$, that is,
\begin{equation}\label{eq:12}
A=(a_{ij}),\qquad a_{ij}=\dist_T(x_i,x_j). 
\end{equation}
from the distance matrix $D$ of $(X,\dist)$. Let us name  this procedure.

\begin{definition} 
Let   \[\GTREE\colon \mathcal{D}_n \to \mathcal{D}_n\] be the function defined as follows. 
Let $(X,\dist)$ be a finite metric space with a fixed enumeration as~\eqref{eq:01} and a fixed point $w$ as in~\eqref{eq:04}, and let $D$ be its distance matrix as~\eqref{eq:03}. 
Then $\GTREE(D)$ is the   $n\times n$ distance matrix of Gromov's approximating tree $(X,\dist_T)$ with respect to $w$ as defined by~\eqref{eq:12}. 
\end{definition}
Now~\eqref{eq:12} can be written as  
\begin{equation}
    A=\GTREE(D).
\end{equation}
In the next section, it is explained how the expression~\eqref{eq:11} provides the connection between  computing Gromov's approximating tree  and computing the all pairs bottleneck paths (APBP) problem on an undirected graph. 

\section{Gromov's tree and the $\APBP$ problem}\label{sec:4}

In the following, all graphs considered are simple graphs; that is, they are undirected, no edge connects a vertex to itself, and no two vertices are connected by more than one edge. In the all pairs bottleneck paths (APBP) problem, there is a connected graph with real non-negative capacities on its edges. The problem asks to determine, for all pairs of  distinct vertices $s$ and $t$, the capacity of a single path for which a maximum amount of flow can be routed from $s$ to $t$. The maximum amount of flow of a path, also called the capacity of the path, is the minimum value of the capacities of its edges. It is known that the solution to the APBP problem in an undirected graph on $n$ vertices with real capacities can be computed in time $O(n^2)$, see~\cite{Hu61}.

It is enough to consider the APBP problem on complete graphs since zero is allowed as a capacity of an edge. Equivalently, if a connected graph with capacities is not complete, we can add the missing edges and assign them very small capacities, for example zero, to obtain a complete graph with the same solution to the APBP problem. 

\begin{definition}[$\mathcal{C}_n$]
Let $\mathcal{C}_n$ be the set of $n\times n$ symmetric matrices with non-negative real entries.    
\end{definition}

\begin{definition}[$\APBP$] 
 Let \[ \APBP\colon \mathcal{C}_n \to \mathcal{C}_n\] be the function such that
 $\APBP(C)$ is the matrix in $\mathcal{C}_n$ whose $(i,j)$-entry, for $i\neq j$, is the maximum amount of flow that can be routed from $i$ to $j$ in the complete graph with vertex set $\{1,\ldots ,n\}$ whose edge capacities are given by the entries of $C$. By convention, we assume that the diagonals of $C$ and $\APBP(C)$ coincide.  
\end{definition}

\begin{remark}\label{rem:apbpij}If $1\leq j<k\leq n$ then the  $(j,k)$-entry of $\APBP(C)$ is defined by 
\begin{equation}\label{eq:14}
    \APBP(C)_{jk}=\sup_{\bar y \in S_{j,k}} \left\{ \min_i \{ c_{y_i,y_{i+1}} \mid \bar y=(y_1,\ldots , y_\ell)\}  \right\}
\end{equation}
where $S_{j,k}$ is the set of finite sequences $\bar y = (y_1,\ldots , y_\ell)$ of elements of $\{1,\ldots ,n\}$ such that $\ell\geq2$, $i=y_1$ and $j=y_\ell$.
\end{remark}

Now we can describe the relation between the fuctions $\GTREE$ and $\APBP$. 
The Gromov products of Gromov's approximating tree of $(X,\dist)$ with respect to $w$ are defined by~\eqref{eq:11}. Recall that $L$ is the matrix of Gromov products of $(X,\dist)$ at $w$ and $A$ is the distance matrix of Gromov's approximating tree based at $w$.  
The main observation is that~\eqref{eq:11} is equivalent to the expression \begin{equation}
    \GPRD(A)=\APBP(L)
\end{equation}
in view of~\eqref{eq:14}. Since $\GPRD$ and $\IGPRD$ are inverses~\eqref{eq:08}, it follows that 
\begin{equation}
\begin{split}
    \GTREE(D)& =A =\IGPRD\circ\APBP(L). 
    \end{split}
\end{equation}
Since both $\APBP$ and $\IGPRD$ have time complexity $O(n^2)$, and   $L$ was defined as  $\GPRD(D)$, we have verified the following statement. 
 
\begin{proposition}\label{prop:7}
The function $\GTREE$ can be expressed as the composition
\begin{equation}\label{eq:17}
    \GTREE=\IGPRD\circ\APBP\circ \GPRD.
\end{equation}
In particular, $\GTREE$ can be computed on any $n\times n$ distance matrix with time complexity $O(n^2)$.
\end{proposition}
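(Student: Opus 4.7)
The plan is to verify the identity $\GTREE(D) = \IGPRD \circ \APBP \circ \GPRD(D)$ on an arbitrary distance matrix $D \in \mathcal{D}_n$ and then read off the complexity by summing the costs of the three factors. Let $L = \GPRD(D)$ be the matrix of Gromov products of $(X,\dist)$ at $w$, and let $A = \GTREE(D)$ be the distance matrix of the approximating tree. Since $\GPRD$ and $\IGPRD$ are mutually inverse by Remark~\ref{rem:Inverses}, the identity reduces to the single equality $\GPRD(A) = \APBP(L)$; once this is in hand, applying $\IGPRD$ to both sides yields the formula.

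The core step is to match the off-diagonal entries of $\GPRD(A)$ and $\APBP(L)$ term by term. On one side, by definition of $\GPRD$, the $(i,j)$-entry of $\GPRD(A)$ equals $(x_i|x_j)'_w$, and by the Ghys--de la Harpe formula~\eqref{eq:11} this is
\[ \sup_{\bar y \in S_{x_i, x_j}} \min_k (y_k | y_{k+1})_w. \]
On the other side, formula~\eqref{eq:14} expresses the $(i,j)$-entry of $\APBP(L)$ as
\[ \sup_{\bar y \in S_{i,j}} \min_k \ell_{y_k, y_{k+1}}, \]
and since $\ell_{y_k, y_{k+1}} = (x_{y_k} | x_{y_{k+1}})_w$, the two expressions coincide under the evident bijection between finite sequences in $X$ and finite sequences in $\{1,\ldots,n\}$. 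For the diagonal, the tree preserves distances to $w$, so $(x_i|x_i)'_w = \dist_T(x_i,w) = \dist(x_i,w) = \ell_{ii}$, which matches the convention that $\APBP$ leaves the diagonal of $L$ unchanged. This establishes $\GPRD(A) = \APBP(L)$.

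With the composition formula proved, the complexity claim is immediate: by Remark~\ref{rem:Inverses}, both $\GPRD$ and $\IGPRD$ run in time $O(n^2)$, and by Hu's algorithm~\cite{Hu61} the function $\APBP$ can be evaluated in time $O(n^2)$. Summing the three contributions gives $O(n^2)$ for $\GTREE$. The only conceivable obstacle is the bookkeeping of the index sets in comparing~\eqref{eq:11} and~\eqref{eq:14}, but this is purely formal once one identifies sequences in $X$ with sequences of their indices; no genuine mathematical content beyond Gromov's expression~\eqref{eq:11} is needed.
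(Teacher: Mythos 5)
Your proof is correct and follows the same route as the paper: both reduce the identity to the single equality $\GPRD(A)=\APBP(L)$ via the inverse pair $(\GPRD,\IGPRD)$, identify the off-diagonal entries by matching Gromov's formula~\eqref{eq:11} with the $\APBP$ formula~\eqref{eq:14}, and sum the three $O(n^2)$ costs. Your explicit check of the diagonal entries (using that the tree preserves distances to $w$) is a detail the paper leaves implicit, but it is the same argument.
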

Equality~\eqref{eq:17} can be visualized as   
\begin{equation} \label{eq:BaseAlgorithm}
\begin{tikzcd}
   D \arrow[rr, bend left, "O(n^2)", "\text{$ \GPRD$}"']
   \arrow[rrrrrr, bend right, "\text{$\GTREE$}"'] && L \arrow[rr, bend left, "O(n^2)", "\text{$\APBP$}"' ] &&  M\arrow[rr, bend left, "O(n^2)", "\text{$\IGPRD$ }"'] && A.
\end{tikzcd}
\end{equation}

By the last statement, computing Gromov's approximating tree for a pointed metric space $(X,\dist,w)$ with $n+1$ points from its matrix of distances reduces to solving the $\APBP$ problem on a complete graph on $n+1$ vertices with edge capacities given by the matrix $L$ of Gromov products with respect to $w$. The row and the column of $L$ labelled by $w$ have only zero entries, therefore:

\begin{corollary}
Computing Gromov's approximating tree for a metric space with $n+1$ points reduces to solve the $\APBP$ problem on a complete graph with $n$ vertices. 
\end{corollary}

\section{Reducing $\APBP$ to $\GTREE$}

In this section we show that the $\APBP$ problem for a complete graph on $n$ vertices with capacities on its edges, can be reduced to computing Gromov's approximating tree for a pointed metric space with $n+1$ points from its distance matrix.

By~\eqref{eq:17}, we have that the restriction of $\APBP$ to  $\mathcal{P}_n$ satisfies
\begin{equation}\label{eq:18}
    \APBP|_{\mathcal{P}_n}=\GPRD\circ \GTREE \circ \IGPRD, 
\end{equation}
for every $n$.
Since $\mathcal{P}_n$ is a proper subset of $\mathcal{C}_n$, this expression does not reduce the computation $\APBP$ to a computation of $\GTREE$ over all of $\mathcal{C}_n$. However, any $C\in \mathcal{C}_n$ can be regarded as matrix in $\mathcal{P}_{n+1}$ modulo some minor adjustments, see Lemma~\ref{lem:metrization}, and then a minor variation of~\eqref{eq:18} holds over all $\mathcal{C}_n$, see~\eqref{eq:22}. 

\begin{definition}
    Let \[ \EXTE \colon \mathcal{C}_n \to \mathcal{C}_{n+1}\]
be the function defined as follows. If $C\in \mathcal{C}_n$ and $\mu=1+\max\limits_{ij} c_{ij}$ then 
\begin{equation}\label{eq:19}
 \EXTE (C)_{ij} = \begin{cases} 
      c_{ij} & 1\leq i,j\leq n \text{ and } i\neq j  \\
      \mu & i=j \text{ and } i\leq n \\
      0 & i=n+1 \text{ or } j=n+1 
   \end{cases} 
\end{equation}
\end{definition}

\begin{remark}
    For any $C\in\mathcal{C}_n$, the $n\times n$ matrix $\APBP(C)$ and $(n+1)\times (n+1)$ matrix $\APBP(\EXTE(C))$ coincide off the diagonal; specifically,
\begin{equation}\label{eq:20}
   \APBP(C)_{ij}= \APBP(\EXTE(C))_{ij} \quad \text{ for all $1\leq i,j\leq n$ such that $i\neq j$.}  
\end{equation}
\end{remark}

Now we state the key idea of this section.

\begin{lemma}\label{lem:metrization}
For any $C\in\mathcal{C}_n$, the matrix $\EXTE(C)$ belongs to $\mathcal{P}_{n+1}$. 
\end{lemma}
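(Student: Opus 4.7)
The plan is to exhibit an explicit pointed pseudo-metric space on $n+1$ points whose matrix of Gromov products at its basepoint equals $\EXTE(C)$. Set $Y=\{y_1,\ldots ,y_{n+1}\}$ with distinguished point $w=y_{n+1}$. The relations in~\eqref{eq:relations} force what the candidate pseudo-metric has to be: if $L := \EXTE(C)$ is to be the matrix of Gromov products of some pseudo-metric $d$ on $Y$ based at $w$, then necessarily $d(y_i,w)=\ell_{ii}=\mu$ for $i\leq n$ and $d(y_i,y_j)=\ell_{ii}+\ell_{jj}-2\ell_{ij}=2(\mu-c_{ij})$ for distinct $i,j\leq n$, together with $d(y_i,y_i)=0$. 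I would \emph{define} $d$ by these formulas and split the proof into two tasks: (a) verify that $d$ is a pseudo-metric on $Y$, and (b) verify that its matrix of Gromov products at $w$ equals $\EXTE(C)$.

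Task (b) is a direct substitution into the definition $(y_i|y_j)_w=\tfrac12(d(y_i,w)+d(y_j,w)-d(y_i,y_j))$. For distinct $i,j\leq n$ one recovers $(y_i|y_j)_w=\tfrac12(\mu+\mu-2(\mu-c_{ij}))=c_{ij}$; the diagonal entries give $(y_i|y_i)_w=d(y_i,w)=\mu$; and every product involving $y_{n+1}$ vanishes because $d(y_{n+1},y_{n+1})=0$. All three cases match the entries of $\EXTE(C)$ prescribed in~\eqref{eq:19}.

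Task (a) carries the mathematical content of the lemma. Symmetry and non-negativity of $d$ follow immediately from the symmetry and entry-wise non-negativity of $C$ together with the strict inequality $\mu>c_{ij}$. For the triangle inequality I would split the analysis by how many of the three points equal $y_{n+1}$: the cases involving at least one such point are short computations using $d(y_i,w)=\mu$ and $d(y_i,y_j)\leq 2\mu$. The main obstacle is the remaining case, in which all three points lie in $\{y_1,\ldots ,y_n\}$: the inequality $d(y_i,y_k)\leq d(y_i,y_j)+d(y_j,y_k)$ reduces algebraically to $c_{ij}+c_{jk}-c_{ik}\leq \mu$, and one has to exploit the specific choice of $\mu$ in the definition of $\EXTE$ to dominate the sum of two entries of $C$ by $\mu$. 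Once that estimate is established the lemma follows.
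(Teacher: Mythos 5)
Your construction is the same as the paper's: you write down the pseudo-metric that the relations~\eqref{eq:relations} force, verify the triangle inequality, and read the Gromov products back off. Task (b) and the triangle-inequality cases involving $y_{n+1}$ are fine. The genuine gap is the one step you defer at the end. Your reduction of the remaining case to $c_{ij}+c_{jk}-c_{ik}\leq\mu$ is algebraically correct, but the estimate you then appeal to --- ``dominate the sum of two entries of $C$ by $\mu$'' --- is not available: the definition~\eqref{eq:19} takes $\mu=1+\max_{ij}c_{ij}$, which dominates a \emph{single} entry of $C$, not a sum of two. Concretely, for $n=3$ with $c_{13}=c_{23}=10$, $c_{12}=0$ and zero diagonal, one gets $\mu=11$, so $d(y_1,y_2)=2(\mu-c_{12})=22$ while $d(y_1,y_3)+d(y_3,y_2)=2+2=4$; the candidate $d$ is not a pseudo-metric, and since the relations~\eqref{eq:relations} leave no other choice of $d$, this $\EXTE(C)$ does not lie in $\mathcal{P}_{4}$. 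So the step would fail as written.

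Your (correct) reduction in fact exposes a slip in the paper's own proof: there the equivalence is recorded as $d_{ij}\leq d_{ik}+d_{kj}\iff 2c^*_{kk}\geq c^*_{ik}+c^*_{jk}-c^*_{ij}$, but expanding $d_{ik}+d_{kj}-d_{ij}=2c^*_{kk}-2c^*_{ik}-2c^*_{kj}+2c^*_{ij}$ shows the correct condition is $c^*_{kk}\geq c^*_{ik}+c^*_{jk}-c^*_{ij}$, i.e.\ exactly your $\mu\geq c_{ik}+c_{jk}-c_{ij}$, which the verification $2\mu\geq c_{ik}+c_{jk}-c_{ij}$ does not deliver. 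Both arguments (and the lemma itself, for the stated $\mu$) are repaired by enlarging $\mu$ to any value at least $2\max_{ij}c_{ij}$, e.g.\ $\mu=1+2\max_{ij}c_{ij}$; this changes only the diagonal of $\EXTE(C)$, so~\eqref{eq:20} and the reduction in Proposition~\ref{prop:11} are unaffected. With that modification your outline closes up into a complete proof.
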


\begin{proof}
To fix notation, let $\EXTE(C)=(c_{ij}^*)$ so $c_{ij}^*$ is defined by~\eqref{eq:19}.
Let $D = (d_{ij})$ be the $(n+1) \times (n+1)$ matrix given by
    \[ d_{ij} := c^*_{ii} + c^*_{jj} - 2c^*_{ij}. \]
    Let $X = \{x_1, \ldots, x_n, x_{n+1}=w\}$ be a set with $n+1$ elements and fixed enumeration. Let $d:X^2 \rightarrow \mathbb{R}$ be given by $d(x_i,x_j) = d_{ij}$. We claim that $(X,d)$ is a metric space. 
    \begin{enumerate}
        \item Since $C$ is a symmetric matrix, we have that $d_{ij}=d_{ji}$ for all $1\leq i,j\leq n+1$.
        \item It follows from the definition that $d_{ii} = 0$ for all $i$. If $1\leq i <j\leq n$, then 
        \[ 2\mu \geq c^*_{ii} + c^*_{jj} - 2c^*_{ij} \ge 2\mu-2\max_{ij}c_{ij} = 2 > 0\]
        and hence 
        \[ 2\mu \geq d_{ij} \geq 2 > 0.\]
        If $1\leq i\leq n$ and $j=n+1$, then 
        \[d_{ij} = c^*_{ii} + c^*_{jj} - 2c^*_{ij} = \mu > 0.\]
        Therefore $d(x_i,x_j) = 0$ if and only if $i = j$.
        \item Let $i,j,k \in \{1,\ldots,n+1\}$. Now we  show that $d_{ij} \le d_{ik}+d_{kj}$. If any of $i,j,k$ are equal to each other, there is nothing to prove. Suppose $i,j,k$ are all distinct. If $k=n+1$ then the above  estimations show that $d_{ij}\leq 2\mu = d_{ik}+d_{kj}$. Analogously, if $i=n+1$ then $d_{ij}=\mu \leq \mu + d_{kj} = d_{ik}+d_{kj}$.  To conclude, let us assume that $i,j,k$ are all distinct and less than $n+1$. Then we have
        \[ d_{ij} \le d_{ik} + d_{kj} \quad \iff \quad 2c^*_{kk} \ge c^*_{ik} + c^*_{jk} - c^*_{ij}, \]
        but we see that
        \[ 2c^*_{kk} = 2\mu \geq  c_{ik}+c_{jk} -c_{ij} =c^*_{ik}+c^*_{jk} - c^*_{ij}. \]
        So, $d$ satisfies the triangle inequality.
    \end{enumerate}
    Therefore, $(X,d)$ is a metric space, in particular, a pseudo-metric space. Observe that  for any $i,j < n
    +1$, 
    \[ c^*_{ij} = \frac{1}{2}(c^*_{ii}+c^*_{jj}-d_{ij}) = \frac{1}{2}(d_{i,n+1}+d_{j,n+1}-d_{ij}) = (x_i|x_j)_{w}. \]
    Consequently, $\EXTE(C) = \GPRD(D)$ and therefore $\EXTE(C) \in \mathcal{P}_{n+1}$.   
\end{proof}

By Lemma~\ref{lem:metrization}, we can regard $\EXTE$ as a function with domain $\mathcal{C}_n$ and codomain $\mathcal{P}_{n+1}$, that is,
\begin{equation}\label{eq:21}
\EXTE\colon \mathcal{C}_n \to \mathcal{P}_{n+1},
\end{equation}
In view of~\eqref{eq:21}, the statement of~\eqref{eq:18} implies
\begin{equation}\label{eq:22} \APBP\circ \EXTE = \GPRD\circ \GTREE \circ \IGPRD\circ \EXTE
\end{equation}
over $\mathcal{C}_n$ for every $n$. Putting together~\eqref{eq:20} and~\eqref{eq:22} we obtain that the computation of $\APBP$ can be reduced to a computation of $\GTREE$ in the following sense.
\begin{proposition}\label{prop:11}
 For any matrix $C\in\mathcal{C}_n$, 
 \begin{equation}\label{eq:23}
 \APBP(C)_{ij}=\GPRD\circ \GTREE \circ \IGPRD\circ \EXTE (C)_{ij}
\end{equation}
for every $1\leq i,j\leq n$ with $i\neq j$.
\end{proposition}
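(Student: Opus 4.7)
The plan is essentially to combine two facts already set up in the preceding discussion: the restriction identity~\eqref{eq:18} applied to $\mathcal{P}_{n+1}$, and the off-diagonal invariance~\eqref{eq:20} of $\APBP$ under the extension $\EXTE$. Both of these are stated above; the proposition merely chains them, so the argument should be short, and the only substance is checking that each piece applies to the matrix $\EXTE(C)$.

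First, I would invoke Lemma~\ref{lem:metrization} (as recorded in the remark following~\eqref{eq:21}) to justify regarding $\EXTE(C)$ as an element of $\mathcal{P}_{n+1}$ rather than merely of $\mathcal{C}_{n+1}$. This is the step that licenses applying the identity~\eqref{eq:18}, which is valid only on $\mathcal{P}_{n+1}$. Once $\EXTE(C) \in \mathcal{P}_{n+1}$, equation~\eqref{eq:18} gives
\[
\APBP(\EXTE(C)) \;=\; \GPRD \circ \GTREE \circ \IGPRD \circ \EXTE(C),
\]
which is exactly~\eqref{eq:22}, as equalities in $\mathcal{C}_{n+1}$. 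Next, I would appeal to~\eqref{eq:20}, which asserts that for all $1 \le i,j \le n$ with $i \ne j$ the $(i,j)$-entry of $\APBP(C)$ agrees with the $(i,j)$-entry of $\APBP(\EXTE(C))$. Composing the two equalities entrywise yields~\eqref{eq:23}.

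The only conceptual point that needs care is~\eqref{eq:20}, i.e., that extending by a point $x_{n+1}$ joined to all other vertices by edges of capacity $0$ does not alter the maximum bottleneck between any two original vertices. This is because any path in the extended complete graph on $\{1,\dots,n+1\}$ that uses the vertex $n+1$ traverses at least one $0$-capacity edge, so its bottleneck is $0$; hence any such path is dominated (in bottleneck value) by the all-zero path in the original graph, and the supremum in~\eqref{eq:14} restricted to sequences avoiding $n+1$ coincides with the supremum over all sequences. Thus the main (and essentially the only) obstacle is this combinatorial check on paths through the added vertex, which is a one-line observation from the definition of $\APBP$ via~\eqref{eq:14}; the rest of the proof is a direct substitution.
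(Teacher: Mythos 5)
Your proposal matches the paper's argument exactly: the paper also deduces \eqref{eq:23} by using Lemma~\ref{lem:metrization} to view $\EXTE(C)$ as an element of $\mathcal{P}_{n+1}$, applying the restriction identity \eqref{eq:18} to obtain \eqref{eq:22}, and combining with the off-diagonal agreement \eqref{eq:20}. Your added one-line justification of \eqref{eq:20} (paths through the new vertex have bottleneck $0$) is correct and is left as an unproved remark in the paper.
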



 This last statement implies that we can compute $\APBP(C)$ via finding the Gromov's approximating tree of a metric space whose Gromov's products are given by $\EXTE (C)$. Note that $\IGPRD\circ\EXTE (C)$ is a distance matrix of a metric space with $n+1$ points. Hence, Propositions~\ref{prop:7} and~\ref{prop:11} yield: 
 
 \begin{corollary}\label{cor:14}
     Solving the APBP problem on an undirected graph on $n$ vertices with positive capacities is equivalent to finding Gromov's approximating tree for a metric space on $n+1$ points.
 \end{corollary}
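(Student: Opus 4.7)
The plan is to read ``equivalent'' as a pair of reductions with matching $O(n^2)$ overhead, one in each direction, and to assemble the corollary by chaining Propositions~\ref{prop:7} and~\ref{prop:11}. No new mathematical content is needed; the task is entirely one of bookkeeping.

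For the direction that computing $\GTREE$ reduces to solving $\APBP$, I would invoke Proposition~\ref{prop:7}: the identity $\GTREE = \IGPRD \circ \APBP \circ \GPRD$ turns a $\GTREE$ computation on an $n\times n$ distance matrix into a single $\APBP$ call on an $n\times n$ capacity matrix, bracketed by two conversions of cost $O(n^2)$. In particular, finding Gromov's approximating tree for a metric space on $n+1$ points is reduced to one $\APBP$ query on a complete graph on $n+1$ vertices, with quadratic overhead.

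For the reverse direction, I would invoke Proposition~\ref{prop:11}. Given $C \in \mathcal{C}_n$, the plan is to form $\EXTE(C) \in \mathcal{P}_{n+1}$, convert by $\IGPRD$ to a distance matrix $D \in \mathcal{D}_{n+1}$ on $n+1$ points, compute $A = \GTREE(D)$, apply $\GPRD$, and finally read off the entries indexed by $i\neq j$ with $i,j\in\{1,\dots,n\}$. By~\eqref{eq:23} these entries are precisely the values $\APBP(C)_{ij}$; the diagonal of $\APBP(C)$ is conventional and requires no computation. Each of $\EXTE$, $\GPRD$, $\IGPRD$ costs $O(n^2)$, so the overhead beyond the single $\GTREE$ call is again $O(n^2)$.

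The main thing to check is not a mathematical obstacle but a bookkeeping one: the size mismatch ($n$ vs.\ $n+1$) and the off-diagonal restriction. Proposition~\ref{prop:11} only guarantees agreement of the two $\APBP$ matrices away from the diagonal, and the reverse reduction costs one extra point because $\EXTE$ attaches a new basepoint whose role as the Gromov product basepoint is underwritten by Lemma~\ref{lem:metrization}. Once these are tracked, the corollary is immediate: each of the two problems on input of size $n$ reduces to the other on input of size $n+1$ with $O(n^2)$ extra work, which is the precise meaning of ``equivalent'' here.
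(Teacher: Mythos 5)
Your proposal is correct and follows the paper's route exactly: the paper derives Corollary~\ref{cor:14} by combining Proposition~\ref{prop:7} (the factorization $\GTREE=\IGPRD\circ\APBP\circ\GPRD$) with Proposition~\ref{prop:11} (the $\EXTE$-based reverse reduction), with the same $O(n^2)$ conversion overhead and the same caveats about the $n$ versus $n+1$ sizes and the off-diagonal restriction that you flag.
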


\section{Gromov's Approximating Tree for Graphs}

In this section we  describe how to compute Gromov's approximating tree from the adjacency matrix of a graph in quadratic time. 
More specifically, we consider the case that $(X,\dist)$ is a \emph{graph metric space}, that means, the metric $\dist(x,y)$ can be realized as the shortest edge-path between vertices of an undirected, connected graph $\Gamma$ with vertex set $X$, and edges of  length one. We aim to compute the Gromov's approximating tree of $(X,\dist)$ from the adjacency matrix of $\Gamma$.  

Let $G$ be the adjacency matrix $\Gamma$,  
\begin{equation}\label{eq:Gmatrix} G= (g_{ij}),\qquad g_{ij}= \begin{dcases} 
      1 & \text{if $\dist(x_i , x_j)=1$},  \\
      0 & \text{if $\dist(x_i , x_j)\neq 1$}. 
   \end{dcases}
\end{equation}
As in the previous sections, let $D$ denote the distance matrix of $(X,\dist, w)$ and let $A$ be the distance matrix of Gromov's approximating tree $(X,\dist_T)$. In this section we address the problem to compute the matrix
\[ A=\GTREE(D) \] 
from the adjacency matrix $G$. 

Observe that one can compute $G$ from $D$ in  time $O(n^2)$; however, the computation of $D$ from $G$ is not known to be quadratic. It can be done in time $O(n^\omega \log n)$, where $O(n^\omega)$ is the complexity of multiplying two $n\times n$ matrices (see the work of Seidel~\cite{SE95}). The best known upper bound of $\omega$ is $2.371552$, see the recent article~\cite{williams2023new}. Hence, based on~\eqref{eq:BaseAlgorithm}, an algorithm that takes as an input the adjacency matrix $G$ of the graph and a distinguish vertex $w$,  and outputs the distance matrix $A$ of the Gromov's approximating tree can be described by  the diagram 
\[ 
\begin{tikzcd}
   G \arrow[rr, bend left, "O(n^\omega \log n)", "\text{Seidel}"']  && 
   D \arrow[rr, bend left, "O(n^2)", "\text{$\GPRD$}"'] && L \arrow[rr, bend left, "O(n^2)", "\text{APBP}"' ] &&  M\arrow[rr, bend left, "O(n^2)", "\text{$\IGPRD$}"'] && A,
\end{tikzcd}
\]
where
\[ L=\GPRD(D),\  M=\APBP(L),\ \text{ and}\   A=\IGPRD(M).\] 
This procedure can be done in time  $O(n^\omega \log n)$ and it is not the best way to proceed. Below we describe a quadratic algorithm that takes $G$ as input and outputs $A$, via solving the APBP problem and bypassing the computation of $D$ and $L$.  

Let $K=\LPRD(G)$ denote 
the $n\times n$ matrix 
containing the Gromov products of pairs of points that are adjacent, that is, 
\begin{equation}
   \LPRD(G)=(k_{ij}),\qquad k_{ij}= \left\{
  \begin{array}{lr} 
      \ell_{ij} & \dist (x_i,x_j)\leq 1, \\
      0 & \text{otherwise}.
      \end{array}
\right.
\end{equation}
Let us show that one can compute $\LPRD(G)$  in time $O(n^2)$.  By definition of $\ell_{ij}$, see~\eqref{eq:relations}, we have that
\[ k_{ii}=\ell_{ii}=\dist(w,x_i).\]
Hence the values  $k_{ii}$ can be computed using Dijkstra's algorithm~\cite{Dij59} which has time complexity $O(n^2)$ when using the adjacency matrix $G$. On the other hand, if $i\neq j$ and $\dist(x_i,x_j)=1$, we have that
\begin{equation}
    k_{ij}=\ell_{ij}=\frac{1}{2}(k_{ii}+k_{jj}-1)
\end{equation}
in view of~\eqref{eq:05}.  Therefore we can compute $k_{ij}$ for $i\neq j$ by checking every entry of the adjacency matrix $G$ and using the previous formula; this procedure clearly has time complexity $O(n^2)$. In a diagram, 
\[ 
\begin{tikzcd}
   G\arrow[rr, bend left, "O(n^2)"] && K,
\end{tikzcd}  \qquad \qquad K=\LPRD(G)
\]
based on applying Dijkstra's algorithm to $G$ and then traversing the matrix $G$.  
The key point of this section is the following statement.
\begin{lemma}\label{lem:incomplete-apbp} \label{lem:incomplete-apbp2} 
$   
  \APBP(K)=\APBP(L).
$
\end{lemma}
\begin{proof}
We regard a path in $\Gamma$ from a vertex $x$ to a vertex $y$ as a finite sequence of vertices $(z_1, z_2, \ldots, z_t)$ such that $x=z_1$, $y=z_t$, and $z_i,z_{i+1}$ are adjacent vertices in $\Gamma$ for all $i$. The length of such a path is $t-1$, and a geodesic path between $x$ and $y$ is a path of minimal length.

Let $x_i, x_j \in X$ with $i \neq j$. Denote by $S_{ij}$ the set of all {sequences} $\bar y=(y_1, y_2, \ldots, y_s)$ with $y_1 = x_i$ and $y_s = x_j$, and denote by $P_{ij}$ the set of all \emph{paths} $\bar z=(z_1, z_2, \ldots, z_t)$ in $\Gamma$ with $z_1 = x_i$ and $z_t = x_j$. Define 
    \[\alpha_{ij} = \sup\limits_{S_{ij}}\left\{ \min_{k} (y_{k-1}|y_{k})_w \right\}, \quad \beta_{ij} = \sup\limits_{P_{ij}}\left\{ \min_{k} (z_{k-1}|z_{k})_w \right\}.\]
    Note that $\alpha_{ij}$ and $\beta_{ij}$ are the maximal capacities from $x_i$ to $x_j$ on the graphs with edge capacities determined by $L$ and $K$, respectively.
    By definition $\ell_{ii}=k_{ii}$ for all $i$, hence the proposition follows by verifying that $\alpha_{ij} = \beta_{ij}$. Since $P_{ij} \subseteq S_{ij}$ it follows that  $\alpha_{ij} \ge \beta_{ij}$.  

Let  $(y_1,y_2,\ldots,y_s) \in S_{ij}$. Let $(z_1,z_2, \ldots,  z_t)\in P_{ij}$ be a path in $\Gamma$ obtained from concatenating shortest paths from $y_k$ to $y_{k+1}$ for all $k \le s$. 
    Let $p^{\ast}$, $(p+1)^{\ast}$ be the integers such that the chosen geodesic between $y_p$ and $y_{p+1}$ consists of the sequence of $z_q$'s such that $p^{\ast} \le q \le (p+1)^{\ast}$. To show that $\alpha_{ij}\leq\beta_{ij}$, it is enough to verify
    \[ (y_p|y_{p+1})_w \le \min_{p^{\ast} \le q < (p+1)^{\ast}}(z_q|z_{q+1}). \]
    For any $p^{\ast} \le q < (p+1)^{\ast}$, since $z_q$ and $z_{q+1}$ are consecutive vertices on a geodesic path from $y_p$ to $y_{p+1}$,
    \[\dist(y_p,y_{p+1})= \dist(y_p,z_q)+\dist(z_{q+1},y_{p+1}) + 1    .\]
    It follows from  the triangle inequality and the previous equality that
    \begin{align*}
        & 2(y_{p}|y_{p+1})_w =  \dist(w,y_p) +\dist(w,y_{p+1} )-\dist(y_p,y_{p+1}) \\
        &\le   \dist(w,z_q)+\dist(z_q,y_p) +\dist(w,z_{q+1} ) +\dist(z_{q+1},y_{p+1} ) -\dist(y_p,y_{p+1}) \\
        &=   \dist(w,z_q) +\dist(w,z_{q+1}) -1   \\
        &=  2(z_q|z_{q+1})_w,
    \end{align*}
    thus proving the desired inequality. Therefore $\alpha_{ij} = \beta_{ij}$, concluding the proof.
\end{proof}

Putting together~\eqref{eq:BaseAlgorithm} and Lemma~\ref{lem:incomplete-apbp2}, we obtain an algorithm, described by the following diagram, that computes from the adjacency matrix $G$ the distance matrix $A$ of Gromov approximating tree: 
\begin{equation} \label{eq:BaseAlgorithm2}
\begin{tikzcd}
   G \arrow[rr, bend left, "O(n^2)", "\text{$\LPRD$}"'] && K \arrow[rr, bend left, "O(n^2)", "\text{APBP}"' ] &&  M\arrow[rr, bend left, "O(n^2)", "\text{$\IGPRD$}"'] && A.
\end{tikzcd}
\end{equation}
where $M=\APBP(L)$. 
Equivalently, 
from~\eqref{eq:17}, it follows that
\[ \begin{split} 
A=\GTREE(D)  &  = 
\IGPRD(\APBP(\LPRD(G))),  
\end{split}\]
where the last expression  provides an algorithm takes as an input the adjacency matrix of $G$ and outputs the distance matrix $A$ of the Gromov approximating tree. Since $\IGPRD$, $\APBP$ and $\LPRD$ can be computed in time $O(n^2)$, we have verified that there is an algorithm that computes Gromov's approximating tree of a connected graph on $n$ vertices from its adjacency matrix in time $O(n^2)$. This proves Proposition~\ref{prop:GraphsIntro} stated in the introduction.

\bibliographystyle{alpha} \bibliography{refs}

\begin{thebibliography}{WXXZ23}

\bibitem[CD00]{CD2000}
Victor Chepoi and Feodor Dragan.
\newblock A note on distance approximating trees in graphs.
\newblock {\em European Journal of Combinatorics}, 21(6):761--766, 2000.

\bibitem[Dij59]{Dij59}
E.~W. Dijkstra.
\newblock A note on two problems in connexion with graphs.
\newblock {\em Numer. Math.}, 1:269--271, 1959.

\bibitem[Dre84]{Dr84}
Andreas W.~M. Dress.
\newblock Trees, tight extensions of metric spaces, and the cohomological dimension of certain groups: a note on combinatorial properties of metric spaces.
\newblock {\em Adv. in Math.}, 53(3):321--402, 1984.

\bibitem[DY06]{DY2006}
Feodor~F. Dragan and Chenyu Yan.
\newblock Distance approximating trees: Complexity and algorithms.
\newblock In Tiziana Calamoneri, Irene Finocchi, and Giuseppe~F. Italiano, editors, {\em Algorithms and Complexity}, pages 260--271, Berlin, Heidelberg, 2006. Springer Berlin Heidelberg.

\bibitem[FRT08]{FRT2008}
Jittat Fakcharoenphol, Satish Rao, and Kunal Talwar.
\newblock {\em Approximating Metric Spaces by Tree Metrics}, pages 51--53.
\newblock Springer US, Boston, MA, 2008.

\bibitem[GdlH90]{GH88}
\'{E}. Ghys and P.~de~la Harpe, editors.
\newblock {\em Sur les groupes hyperboliques d'apr\`es {M}ikhael {G}romov}, volume~83 of {\em Progress in Mathematics}.
\newblock Birkh\"{a}user Boston, Inc., Boston, MA, 1990.
\newblock Papers from the Swiss Seminar on Hyperbolic Groups held in Bern, 1988.

\bibitem[Gro87]{Gr87}
M.~Gromov.
\newblock Hyperbolic groups.
\newblock In {\em Essays in group theory}, volume~8 of {\em Math. Sci. Res. Inst. Publ.}, pages 75--263. Springer, New York, 1987.

\bibitem[Hu61]{Hu61}
T.~C. Hu.
\newblock The maximum capacity route problem.
\newblock {\em Operations Research}, 9(6):898--900, 1961.

\bibitem[Sei95]{SE95}
R.~Seidel.
\newblock On the all-pairs-shortest-path problem in unweighted undirected graphs.
\newblock {\em Journal of Computer and System Sciences}, 51(3):400--403, 1995.

\bibitem[WXXZ23]{williams2023new}
Virginia~Vassilevska Williams, Yinzhan Xu, Zixuan Xu, and Renfei Zhou.
\newblock New bounds for matrix multiplication: from alpha to omega, 2023.

\end{thebibliography}

\end{document}